\newcommand{\comment}[1]{}
\newcommand{\seclabel}[1]{\label{sec:#1}}
\newcommand{\secref}[1]{\mbox{Section~\ref{sec:#1}}}
\newcommand{\figlabel}[1]{\label{fig:#1}}
\newcommand{\figref}[1]{\mbox{Figure~\ref{fig:#1}}}
\newcommand{\eqlabel}[1]{\label{eq:#1}}
\newcommand{\eqref}[1]{(\ref{eq:#1})}
\newtheorem{thm}{Theorem}{\bfseries}{\itshape}
\newcommand{\thmlabel}[1]{\label{thm:#1}}
\newcommand{\thmref}[1]{Theorem~\ref{thm:#1}}
\newtheorem{lem}{Lemma}{\bfseries}{\itshape}
\newcommand{\lemlabel}[1]{\label{lem:#1}}
\newcommand{\lemref}[1]{Lemma~\ref{lem:#1}}
\newtheorem{assumption}{Assumption}{\bfseries}{\rm}
\newtheorem{op}{Open Problem}
\newcommand{\etal}{\emph{et al}}
\newcommand{\ceil}[1]{\left\lceil #1 \right\rceil}
\newcommand{\R}{\mathbb{R}}
\DeclareMathOperator{\lft}{left}
\DeclareMathOperator{\rght}{right}
\DeclareMathOperator{\prnt}{parent}
\newcommand{\depth}{d}
\title{\MakeUppercase{Biased Range Trees}}
\author{Vida Dujmovi\'c
	\and John Howat
	\and Pat Morin}
\begin{document}
\maketitle
\begin{abstract}
A data structure, called a \emph{biased range tree}, is presented that
preprocesses a set $S$ of $n$ points in $\R^2$ and a query
distribution $D$ for 2-sided orthogonal range counting queries.  The
expected query time for this data structure, when queries are drawn
according to $D$, matches, to within a constant factor, that of the
optimal decision tree for $S$ and $D$.   The memory and preprocessing
requirements of the data structure are $O(n\log n)$.
\end{abstract}

\section{Introduction}

Let $S$ be a set of $n$ points in $\R^2$ and let $D$ be a probability
measure over $\R^2$.  A \emph{2-sided orthogonal range counting query}
over $S$ asks, for a query point $q=(q_x,q_y)$, to report the number
of points $(p_x,p_y)\in S$ such that $p_x \ge q_x$ and $p_y \ge q_y$.
A 2-sided range counting query \emph{has distribution $D$} if the
query point $q$ is chosen from the probability measure $D$.  If $T$ is
a data structure for answering 2-sided range counting queries over $S$
then we denote by $\mu_D(T)$ the expected time, using $T$, to answer a
range query with distribution $D$.  The current paper is concerned
with preprocessing the pair $(S,D)$ to build a data structure $T$ that
minimizes $\mu_D(T)$.

\subsection{Previous Work}

The general topic of geometric range queries is a field that has seen
an enormous amount of activity in the last century.  Results in this
field depend heavily on the types of objects the data structure stores
and on the shape of the query ranges.  In this section we only mention
a few data structures for orthogonal range counting and semigroup
queries in 2 dimensions.  The interested reader is directed to the
excellent, and easily accessible, survey by Agarwal and Erickson
\cite{ea99}.

Orthogonal range counting is a classic problem in computational
geometry.  The 2- (and 3- and 4-) sided range counting problem can be
solved by Bentley's \emph{range trees} \cite{ae42}.  Range trees use
$O(n\log n)$ space and can be constructed in $O(n\log n)$ time.
Originally, range trees answered queries in $O(\log^2 n)$ time.
However, with the application of fractional cascading
\cite{ae76,ae196} the query time can be reduced to $O(\log n)$ without
increasing the space requirement by more than a constant factor.
Range trees can also answer more general \emph{semigroup queries} in
which each point of $S$ is assigned a weight from a commutative
semigroup and the goal is to report the weight of all points in the
query range \cite{ae133,ae292}.

For 2-sided orthogonal range counting queries, Chazelle
\cite{ae55,ae58} proposes a data structure of size $O(n)$, that can be
constructed in $O(n\log n)$ time, and that can answer range couting
queries in $O(\log n)$ time.  Unfortunately, this data structure is
not capable of answering semigroup queries in the same time bound.
For semigroup queries, Chazelle provides data structures with the
following requirements: (1)~$O(n)$ space and $O(\log^{2+\epsilon} n)$
query time, (2)~$O(n\log\log n)$ space and $O(\log^{2}n\log\log n)$
query time, and (3)~$O(n\log^\epsilon n)$ space and $O(\log^2 n)$
query time.

Practical linear space data structures for range counting include
$k$-d trees \cite{ae41}, quad-trees \cite{ae251}, and their
variants.  These structures are practical in the sense that they are
easy to implement and use only $O(n)$ space.  Unfortunately, neither
of these structures has a worst-case query time of $\log^{O(1)} n$.
Thus, in terms of query time, $k$-d trees and quad-trees are nowhere
near competitive with range trees.


Despite the long history of data structures for orthogonal range
queries, range trees with fractional cascading are still the most
effective data structure for 2-sided orthogonal range queries in the
semigroup model.  In particular, no data structure is currently known
that uses $o(n\log n)$ space and can answer 2-sided orthogonal range
queries in $O(\log n)$ time. 

\subsection{New Results}

In the current paper we present a data structure, the \emph{biased
range tree}, for 2-sided orthogonal range counting.  Biased range
trees fit into the \emph{comparison tree} model of computation, in
which all decisions made during a query are based on the result of
comparing either the $x$- or $y$-coordinate of the query point to some
precomputed values.  Most data structures for orthogonal range
searching, including range trees, $k$-d trees and quadtrees, fit into
the comparison tree model. This model makes no assumptions about the
$x$- or $y$-coordinates of points other than that they each come from
some (possibly different) total order.  This is particularly useful in
practice since it avoid the precision problems usually associated with
algebraic decisions and allows the mixing of different data types
(one for $x$-coordinates and one for $y$-coordinates) in one data
structure.

A biased range tree has size $O(n\log n)$, can be constructed in
$O(n\log n)$ time, and can answer range counting (or semigroup)
queries in $O(\mu_D(T^*))$ expected time, where $T^*$ is any
comparison tree that answers range counting queries over $S$.  In
particular, $T^*$ could be a comparison tree that minimizes
$\mu_D(T^*)$ implying that the expected query time of our data
structure is as fast as the fastest comparison-based data structure
for answering range counting queries over $S$.  Moreover, the
worst-case search time of biased range trees is $O(\log n)$, matching
the worst-case performance of range trees.

Note that we do not place any restrictions on the comparison tree
$T^*$.  Biased range trees, while requiring only $O(n\log n)$ space,
are competitive with any comparison-based data structure.  Thus, the
memory requirement of biased range trees is the same as that of range
trees but their expected query time can never be any worse.
 
The remainder of the paper is organized as follows. In
\secref{preliminaries} we present background material that is used in
subsequent sections.  In \secref{data-structure} we define biased
range trees. In \secref{lower-bound} we prove that biased range trees
are optimal.  In \secref{summary} we recap, summarize, and describe
directions for future work.

\section{Preliminaries}
\seclabel{preliminaries}

In this section we give definitions, notations, and background
that are prerequisites for subsequent sections.

\paragraph{Rectangles.}

For the purposes of the current paper, a \emph{rectangle}
$R(a,b,c,d)$ is defined as
\[
    R(a,b,c,d) = \{ (x,y) : \mbox{$a\le  x \le b$ and $c \le y \le d$}\}
	\enspace .
\]
We also allow unbounded rectangles by setting $a,c=-\infty$ and/or
$b,d=\infty$.  Therefore, under this definition, rectangles can have
0, 1, 2, 3, or 4 sides.  For a query point $q=(q_x,q_y)$ we denote 
by $R(q)$ the query range $R(q_x,\infty,q_y,\infty)$.  A
\emph{horizontal strip} is rectangle of the form
$R(-\infty,\infty,c,d)$ and a \emph{vertical strip} is a rectangle of
the form $R(a,b,-\infty,\infty)$.

\paragraph{Classification Problems and Classification Trees.}

A \emph{classification problem} over a domain $\mathcal{D}$ is a
function $\mathcal{P}:\mathcal{D}\mapsto \{0,\ldots,k-1\}$.  The
special case in which $k=2$ is called a \emph{decision problem}.  A
$d$-ary \emph{classification tree} is a full $d$-ary tree\footnote{A
full $d$-ary tree is a rooted ordered tree in which each non-leaf node
has exactly $d$ children.} in which each internal node $v$ is labelled
with a function $P_v:\mathcal{D}\mapsto\{0,.\ldots,d-1\}$ and for
which each leaf $\ell$ is labelled with a value
in $\{0,\ldots,k-1\}$. The \emph{search path} of an input $q$
in a classification tree $T$ starts at the root of $T$ and, at each
internal node $v$, evaluates $i=P_v(q)$ and proceeds to the $i$th
child of $v$.  We denote by $T(q)$ the label of the final (leaf) node
in the search path for $q$.  We say that the classification tree $T$
\emph{solves} the classification problem $\mathcal{P}$ over the domain
$\mathcal{D}$ if, for every $q\in \mathcal{D}$, $\mathcal{P}(q)=T(q)$.

The particular type of classification trees we are concerned with are
\emph{comparison trees}.  These are binary classification trees in
which the function $P_v$ at each node $v$ compares either $q_x$ or
$q_y$ to a fixed value (that may depend on the point set $S$ and the
distribution $D$).  For the problem of 2-sided range counting over
$S$, the leaves of $T$ are labelled with values in $\{0,\ldots,|S|\}$
and $T(q)=|R(q)\cap S|$ for all $q\in\R^2$.

\paragraph{Probability.}

For a probability measure $D$ and an event $X$, we denote by $D_{|X}$ the
distribution $D$ conditioned on $X$.  That is, the distribution where
the probability of an event $Y$ is $\Pr(Y\mid X)=\Pr(Y\cap X)/\Pr(X)$.
The probability measures used in this paper are usually defined over
$\R^2$.  We make no assumptions about how these measures are
represented, but we assume that an algorithm can, in constant time,
given a rectangle $r$, determine $\Pr(r)$.

For a classification tree $T$ that solves a problem
$P:\mathcal{D}\mapsto\{0,\ldots,k-1\}$ and a probability measure $D$
over $\mathcal{D}$, the \emph{expected search time} of $T$, denoted
by $\mu_D(T)$, is the
expected length of the search path for $q$ when $q$ is drawn at random
from $\mathcal{D}$ according to $D$.  Note that, for each leaf $\ell$
of $T$ there is a maximal subset $r(\ell)\subseteq \mathcal{D}$ such
that the search path for any $q\in r(\ell)$ ends at $\ell$.  Thus, the
expected search time of $T$ (under distribution $D$) can be written as
\[
     \mu_D(T) = \sum_{\ell\in L(T)} \Pr(r(\ell))\times \depth_T(\ell)
	\enspace ,
\]
where $L(T)$ denotes the leaves of $T$ and $\depth_T(\ell)$ denotes the
length of the path from the root of $T$ to $\ell$.  When the tree $T$
is obvious based on context we will sometimes use the notation
$d(\ell)$ to denote $d_T(\ell)$. Note that, for
comparison trees, the closure of $r(\ell)$ is always a rectangle.  For
a node $v$ in a tree, we will use the phrases \emph{depth of $v$} and
\emph{level of $v$} interchangeably and they both refer to $d(v)$. 

The following theorem is a restatement of (half of) Shannon's
Fundamental Theorem for a Noiseless Channel \cite[Theorem 9]{s48}.
\begin{thm}\thmlabel{shannon}
Let $\mathcal{P}:\mathcal{D}\mapsto \{0,\ldots,k-1\}$ be a classification
problem and let $p\in \mathcal{D}$ be selected from a distibution $D$ such
that $\Pr\{\mathcal{P}(p)= i\}=p_i$, for $0\le i< k$.  Then, any
$d$-ary classification tree $T$ that solves $\mathcal{P}$ has
\begin{equation}
     \mu_D(T) \ge \sum_{i=0}^{k-1} p_i\log_d(1/p_i) \enspace .
	\eqlabel{shannon}
\end{equation}
\end{thm}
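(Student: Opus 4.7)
The plan is to prove the classical entropy lower bound by going through leaf probabilities, invoking Kraft's inequality, and then collapsing leaves that share a common label.

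First, I would set up notation. For each leaf $\ell$ of $T$, let $q_\ell = \Pr(r(\ell))$ be the probability that a random query ends at $\ell$, so that $\sum_{\ell\in L(T)} q_\ell = 1$ and $\mu_D(T) = \sum_\ell q_\ell\, d(\ell)$. For each label $i$, let $L_i = \{\ell \in L(T) : \text{$\ell$ is labelled $i$}\}$; since $T$ solves $\mathcal{P}$, we have $\sum_{\ell\in L_i} q_\ell = p_i$.

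Next, I would invoke Kraft's inequality: since $T$ is a full $d$-ary tree, $\sum_{\ell\in L(T)} d^{-d(\ell)} \le 1$. Set $r_\ell = d^{-d(\ell)} / Z$, where $Z = \sum_{\ell\in L(T)} d^{-d(\ell)} \le 1$; then $\{r_\ell\}$ is a probability distribution on $L(T)$. By Gibbs' inequality (nonnegativity of relative entropy),
\begin{equation*}
  \sum_{\ell\in L(T)} q_\ell \log_d(1/q_\ell)
  \;\le\; \sum_{\ell\in L(T)} q_\ell \log_d(1/r_\ell)
  \;=\; \sum_{\ell\in L(T)} q_\ell\bigl(d(\ell) + \log_d Z\bigr)
  \;\le\; \mu_D(T) \enspace ,
\end{equation*}
using $\log_d Z \le 0$ in the last step. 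This bounds $\mu_D(T)$ from below by the entropy of the leaf distribution.

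Finally, I would show that collapsing leaves by label can only decrease entropy, i.e.
\begin{equation*}
  \sum_{i=0}^{k-1} p_i \log_d(1/p_i)
  \;\le\; \sum_{\ell\in L(T)} q_\ell \log_d(1/q_\ell) \enspace .
\end{equation*}
This follows from the log-sum inequality applied to each $L_i$: $\sum_{\ell\in L_i} q_\ell \log_d(q_\ell/1) \ge p_i \log_d(p_i/|L_i|) \ge p_i \log_d p_i$ (equivalently, conditional entropy is nonnegative). Combining with the previous inequality yields \eqref{shannon}. The only mildly delicate step is this grouping bound; everything else is a direct appeal to Kraft plus Gibbs, so I would spend the most care justifying why summing the per-leaf self-information dominates the per-label self-information.
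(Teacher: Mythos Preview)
The paper does not prove this theorem; it simply states it as a restatement of Shannon's noiseless coding theorem and cites \cite{s48}. Your outline via Kraft's inequality and Gibbs' inequality is the standard modern proof, and Steps~1 and~2 are fine.

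Step~3, however, is written incorrectly. Your displayed chain
\[
  \sum_{\ell\in L_i} q_\ell \log_d(q_\ell/1) \;\ge\; p_i \log_d(p_i/|L_i|) \;\ge\; p_i \log_d p_i
\]
has two problems. First, the second inequality is false whenever $|L_i|>1$: since $p_i/|L_i|\le p_i$, we have $p_i\log_d(p_i/|L_i|)\le p_i\log_d p_i$, not $\ge$. Second, even if the chain held, summing over $i$ and negating would give $\sum_\ell q_\ell\log_d(1/q_\ell)\le \sum_i p_i\log_d(1/p_i)$, which is the \emph{reverse} of the inequality you need. So the log-sum application, as written, proves nothing useful here.

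The fix is simpler than log-sum. For each $\ell\in L_i$ you have $q_\ell\le \sum_{\ell'\in L_i} q_{\ell'}=p_i$, hence $\log_d(1/q_\ell)\ge\log_d(1/p_i)$, and therefore
\[
  \sum_{\ell\in L_i} q_\ell\log_d(1/q_\ell)\;\ge\;\sum_{\ell\in L_i} q_\ell\log_d(1/p_i)\;=\;p_i\log_d(1/p_i)\enspace.
\]
Summing over $i$ gives the desired grouping bound. This is exactly the statement $H(I)\le H(L)$ for $I$ a deterministic function of $L$, which is what your parenthetical ``conditional entropy is nonnegative'' correctly invokes; only the displayed justification needs to be replaced.
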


In terms of range counting, \thmref{shannon} immediately implies that,
if $p_i$ is the probability that the query range contains $i$ points
of $S$, then any binary decision tree $T$ that does range counting has
$\mu_D(T) \ge \sum_{i=0}^{n} p_i\log(1/p_i)$.  Unfortunately for us,
this lower bound is too weak and, in general, there is no decision
tree whose performance matches this obvious entropy lower bound.

A stronger lower bound on the cost of range searching can be obtained
by considering the arrangement $A$ of $2n$ rays obtained by drawing
two rays originating at each point of $S$, one to the left and one
downwards (see \figref{arrangement}.a).  This arrangement partitions
the plane into a set of faces $F(A)$.  If $T$ is a comparison tree for
range counting in $S$, then there is no leaf $\ell$ of $T$ such that
the interior of $r(\ell)$ intersects any edge of $A$ since otherwise
there are query points $q$ in the neighbourhood of this intersection
for which $T(q)\neq |R(q)\cap S|$.  Therefore, by relabelling the leaves
of $T$ with the faces of $A$, we obtain a data structure for
determining which face of $A$ contains the query point $q$.
By \thmref{shannon}, this implies that
\[
   \mu_D(T) \ge \sum_{f\in F(A)} \Pr(f)\log(1/\Pr(f)) \enspace .
\]
Unfortunately, this bound is still not strong enough and, in general,
there is no decision tree $T$ that matches this lower bound.  To see
this, consider \figref{arrangement}.b, when the query point $q$ is
uniformly distributed among the $n+1$ shaded circles.  In this case,
$q$ is always in the same face of $A$ so the lower bound given above
is 0.  Nevertheless, it is not hard to see that the leaves of
any decision tree $T$ for range searching in $S$ can be relabelled to
determine which of the $n+1$ circles contains $q$, so $\mu_D(T) \ge
\log(n+1)$.

\begin{figure}
  \begin{center}
    \begin{tabular}{cc}
      \includegraphics{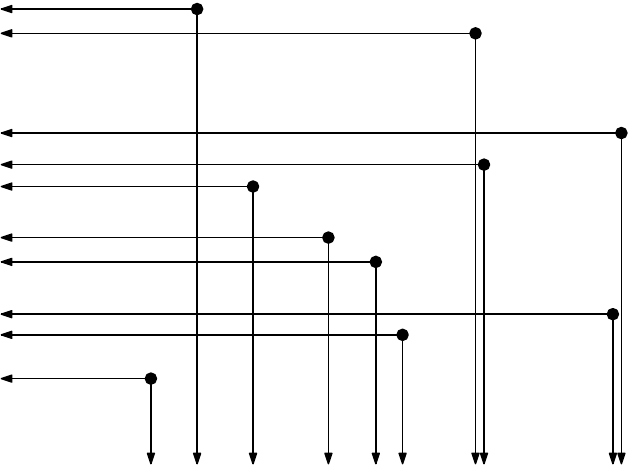} &
      \includegraphics{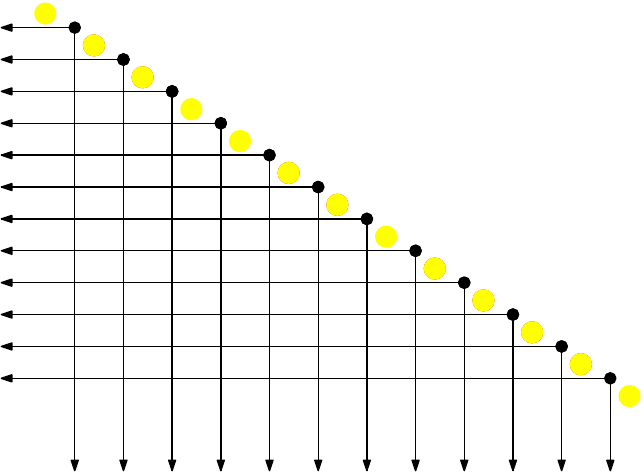} \\
        (a) & (b)
    \end{tabular}
  \end{center}
  \caption{(a)~The distribution of the query point $q$ over the faces
     of the arrangement $A$ gives a lower bound on the cost of any
     comparison tree for range counting in $S$. (b)~The lower bound is
     not always achievable by a comparison tree.}
  \figlabel{arrangement}
\end{figure}

\paragraph{Biased Search Trees.}

\emph{Biased search trees} are a classic data structure for solving
the following 1-dimensional problem:  Given an increasing sequence of
real numbers $X=\langle x_0=-\infty,x_1,x_2,\ldots ,
x_n,x_{n+1}=\infty\rangle$ and a probability distribution $D$ over
$\R$, construct a binary search tree  $T=T(X,D)$ so that, for any
query value $q$ drawn from $D$, one can quickly find the unique
interval $[x_i,x_{i+1})$ containing $q$.  If $p_i$ is the probability
that $q\in[x_i,x_{i+1})$ then the expected number of comparisons
performed while searching for $q$ is given
by
\[
   \mu_D(T) \le \sum_{i=1}^{n} p_i\log(1/p_i) + 1 
\]
and the tree $T$ can be constructed in $O(n)$ time \cite{m75}.
Clearly, by \thmref{shannon}, the query time of this binary search
tree is optimal up to an additive constant term.  Note that, by having
each node of $T$ store the size of its subtree, a biased search tree
can count the number of elements of $X$ in the interval
$I(q)=[q,\infty)$ without increasing the search time by more than a
constant factor.  Thus, biased search trees are an optimal data
structure for 1-dimensional range counting.

\section{Biased Range Trees}
\seclabel{data-structure}

In this section we describe the biased range tree data structure,
which has three main parts: the backup tree, the primary tree, and a
set of catalogues that adorn the nodes of the primary tree.

\subsection{The Backup Tree}

In trying to achieve optimal query time, biased range trees will try
to quickly answer queries that are, in some sense, easy.  In some
cases, a query is difficult and it cannot be answered in $o(\log n)$
time.  For these queries, a \emph{backup} range tree that stores the
points of $S$ and can answer any 2-sided range query in $O(\log n)$
worst-case time is used.  The preprocessing time and space
requirements of this backup tree are $O(n\log n)$ \cite{bkos97}.

\subsection{The Primary Tree}

Like a range tree, a biased range tree is an augmented data structure
consisting of a primary tree whose nodes store secondary structures.
However, in a range tree the primary tree is a binary search tree that
discriminates based only on the $x$-coordinate of the query point $q$.
In order to achieve optimal expected query time, this turns out to be
insufficient, so instead biased range trees use a variation of a $k$-d
tree as the primary tree.

The primary tree is constructed in a top-down fashion.  Each node $v$
of $T$ is associated with a region $r(v)$ whose closure is a
rectangle.  The region associated with the root of $T$ is all of
$\R^2$.  We say that a node $v$ is \emph{bad} if its depth is at least
$\ceil{\log_2 n}$ and $r(v)\cap S \neq \emptyset$.  A node $v$ is
\emph{split} if $v$ its depth is less than $\ceil{\log_2 n}$, and
$r(v)\cap S\neq \emptyset$.  The two children of a split node $v$ are
associated with the two regions obtained by removing a horizontal or
vertical strip $s(v)$ from $r(v)$ depending on whether the depth of
$v$ is even or odd, respectively.  We call a node $v$ at even distance
from the root a \emph{vertical node}, otherwise we call $v$ a
\emph{horizontal node}. 

Refer to \figref{left-right}.  For a vertical node $v$, we denote its
children by $\lft(v)$ and $\rght(v)$ and call them the \emph{left
child} and \emph{right child} of $v$, depending on which side of the
vertical strip (left or right) they are.  For uniformity, we will also
call the children of a node $v$ that is split with a horizontal strip
$\lft(v)$ and $\rght(v)$.  The child below the strip is denote by $\lft(v)$ and
the child above the strip is denoted by $\rght(v)$.
Similarly, the left and right
boundaries of a strip $s(v)$ at a horizontal node $v$ refer to the
bottom and top sides of $s(v)$.  Note that, with these conventions, if
the query point $q$ is in $r(\lft(v))$ then $R(q)$ intersects
$r(\rght(v))$.  However, if $q\in r(\rght(v))$ then $R(q)$ does not
intersect $r(\lft(v))$.  Similarly, for a query point $q\in s(v)$, the
query range $R(q)$ intersects $r(\rght(v))$ but not $r(\lft(v))$

\begin{figure}
  \begin{center}
    \begin{tabular}{cc}
      \includegraphics{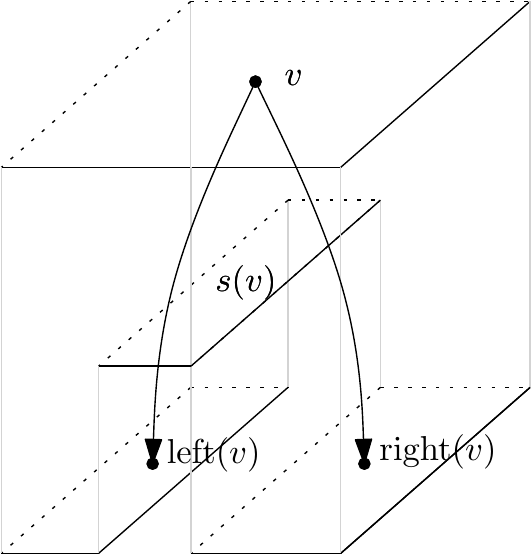} & \includegraphics{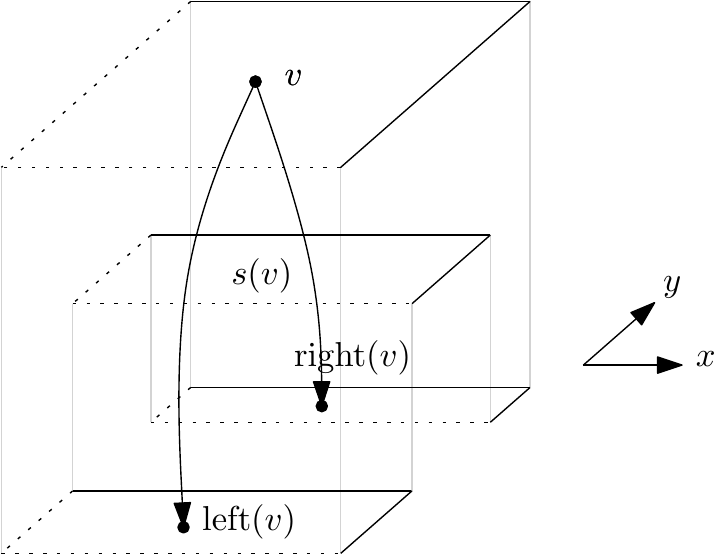} \\
      (a) & (b)
    \end{tabular}
  \end{center}
  \caption{The splitting of (a)~a vertical node $v$ and (b)~a horizontal
  node $v$.}
  \figlabel{left-right}
\end{figure}

All that remains is to define the strip $s(v)$ for each node $v$. If
$v$ is a leaf then we use the convention that $s(v)=r(v)$.  If $v$ is
not a leaf then $s(v)\subseteq r(v)$ is selected as a maximal strip
containing no point of $r(v)\cap S$ in its interior, that is closed on
its right side and open on its left side and such that each of the at
most two components of $r(v)\setminus s(v)$ has probability at most
$\Pr(r(v))/2$.  Suppose $v$ is a vertical node.  Then let
$r(v)_1,\ldots,r(v)_k$, be a partitioning of $r(v)$ into strips, in
left-to-right order, obtained by drawing a vertical line through each
of the $k$ points in $S\cap r(v)$.  We use the convention that each
strip is closed on its right side and open on its left side.  Then
there is a unique strip $s(v)=r(v)_i$ such that $\sum_{j=1}^{i-1}
\Pr(r(v)_j) \le \Pr(r(v))/2$ and $\sum_{j=i+1}^{k} \Pr(r(v)_j) <
\Pr(r(v))/2$.  For a
horizontal node $v$, the definition of $s(v)$ is analagous except we
use horizontal lines through each point of $r(v)\cap S$.

Note that for a node $v$ that is not a leaf, we use the convention
that $s(v)$ contains its right side but not its left side and that
$r(\rght(v)$ and $r(\lft(v))$ are the two components of $r(v)\setminus
s(v)$.  This implies that $r(\lft(v))$ and/or $r(\rght(v))$ may be
empty, in which case $\lft(v)$, respectively, $\rght(v)$ is a leaf of
$T$.  With these definitions, for any point $q\in \R^2$ there is
exactly one vertex $v$ of $T$ such that $q\in s(v)$.

The following two properties are easily derived from the definition of
$T$ and are necessary to prove the optimality of biased range trees:
\begin{enumerate}

\item Any node $v$ at depth $i$ in $T$ has $\Pr(s(v))\le \Pr(r(v))\le 1/2^i$.

\item For any node $v$ of $T$, if $\Pr(r(v)) > 0$, then the closure of
$r(v)$ contains at least one point of $S$.
\end{enumerate}

Point~1 above follows immediately from the definition of $s(v)$. Next
we explain the logic leading to Point~2.  If $r(v)$ contains a point
of $S$ then so does the closure of $r(v)$.   If $r(v)=\emptyset$, then
$\Pr(r(v)) = 0$.  Otherwise, $r(v) \neq \emptyset$ and $r(v)$ has no
point of $S$ in its interior.  Then consider the parent $w$ of $v$.
Since $s(w)$ does not contain $r(v)$ there must be a point of $S$ on
the boundary of $s(w)$ that is also on the boundary of $r(v)$.
Therefore $r(v)$ contains this point in its closure.

\subsection{The Catalogues}
\seclabel{catalogues}

The nodes of the tree $T$ are augmented with additional data
structures called \emph{catalogues} that hold subsets of $S$.  Each
node $v$ has two catalogues, $C_x(v)$ and $C_y(v)$ that store subsets
of $S$ sorted by their $x$-, respectively, $y$-, coordinate.
Intuitively, $C_x(v)$ stores points that are ``above'' $r(v)$ and
$C_y(v)$ stores points that are ``to the right of'' $r(v)$.  (Refer to
\figref{catalogues}.)  More precisely, if $v$ is a horizontal node,
then $C_x(\lft(v))= (s(v)\cup r(\rght(v)))\cap S$ and
$C_y(\lft(v))=\emptyset$.  If $v$ is a vertical node, then
$C_y(\lft(v)) = (s(v)\cup r(\rght(v)))\cap S$ and
$C_x(\lft(v))=\emptyset$.  For any node $v$ that is the root of $T$ or
a right child of its parent, $C_x(v)=C_y(v)=\emptyset$.  

\begin{figure}
  \begin{center}
    \begin{tabular}{c|c}
      \includegraphics{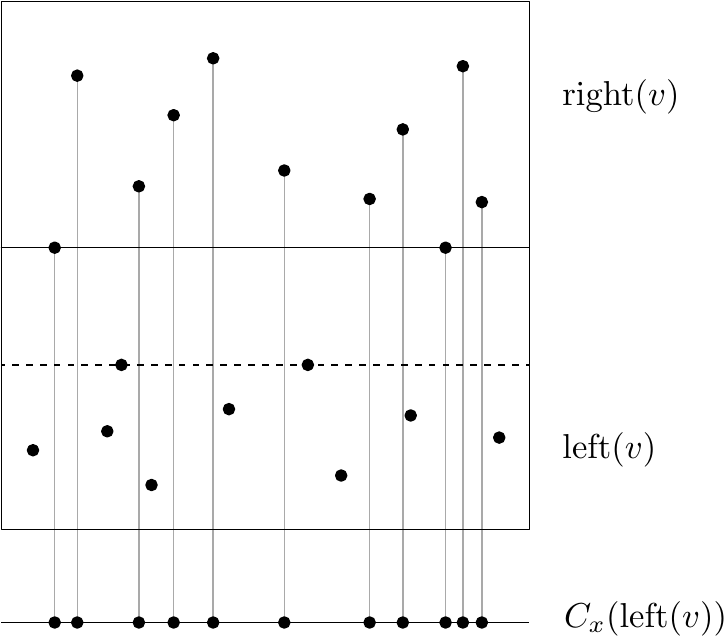} &
        \includegraphics{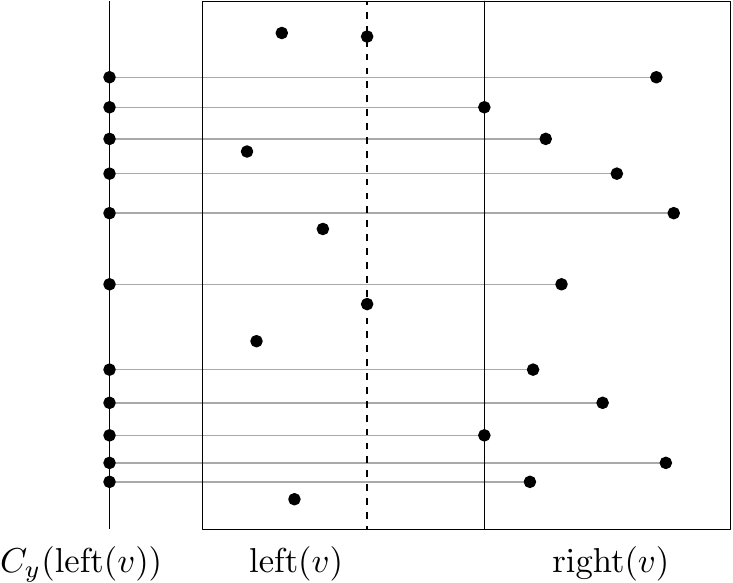} \\
       (a) & (b) 
    \end{tabular}
  \end{center}
  \caption{The catalogues of (a)~a horizontal node $v$ and (b)~a
  vertical node $v$.}
  \figlabel{catalogues}
\end{figure}

Consider any node $v$ that is not a bad leaf and any point $q\in s(v)$.
If $v$ has a left child then let $v_1=\lft(v)$, otherwise, let
$v_1=v$.
Let $v_1,\ldots,v_k$ denote the path from $v_1$ to the root of $T$
(see \figref{3d-view}).  Then the catalogues of $v_1,\ldots,v_k$ have
the following properties:
\begin{figure}
  \begin{center}
    \includegraphics{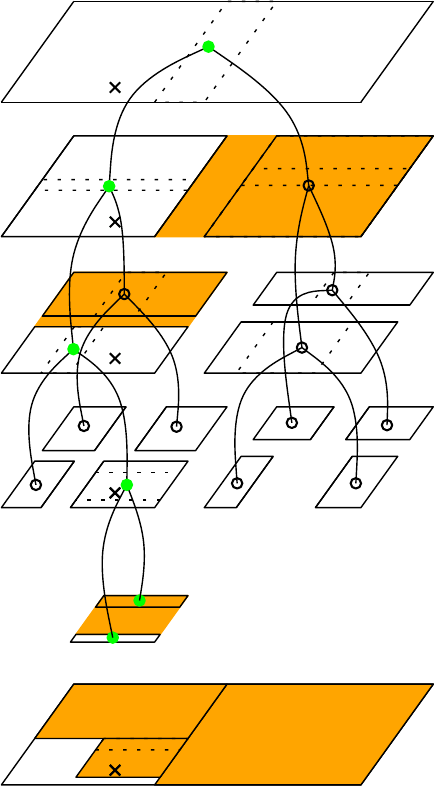}
  \end{center}
  \caption{The area covered by catalogues on the path $v$ to the root
of $T$. The $\times$ symbol shows the location of the query point $q$.}
  \figlabel{3d-view}
\end{figure}

\begin{enumerate}

\item The points in the catalogues of $v_1,\ldots,v_k$ are above or to
the right of $q$.  That is, for each $1\le i \le k$, all points in
$C_y(v_i)$, respectively, $C_x(v_i)$ have their $x$-, respectively,
$y$-, coordinate greater than or equal to $q_x$, respectively, $q_y$.

\item All catalogues at nodes in $v_1,\ldots,v_k$ are disjoint.  That,
is, for each $1\le i\le j \le k$,
$C_x(v_i)\cap C_x(v_j) = \emptyset$,
$C_y(v_i)\cap C_y(v_j) = \emptyset$,
$C_x(v_i)\cap C_y(v_j) = \emptyset$, and
$C_x(v_j)\cap C_y(v_i) = \emptyset$.

\item The catalogues at nodes $v_1,\ldots,v_k$ contain all points in
the query range $R(q)$.  That is,
\[
     R(q)\cap S \subseteq \bigcup_{i=1}^k \left(C_x(v_i)\cup C_y(v_i)\right)
        \enspace . 
\]
\end{enumerate}

Note that, points 1, 2 and 3 above imply that determining $|R(q)\cap
S|$ can be done by solving a sequence of 1-sided range queries in the
$x$- and $y$-catalogues of $v_1,\ldots,v_k$.  However, performing these queries
individually would take too long.

To speed up the process of navigating the catalogues of $T$,
fractional cascading \cite{ae76} is used.  Starting at the root of $T$ and as long as $v$ is
not a leaf, a fraction of the data in $C_x(v)$ is cascaded into
$C_x(\rght(v))$ and $C_x(\lft(v))$.  As well, a fraction of the data
in $C_y(v)$ is cascaded into both $C_y(\rght(v))$ and $C_y(\lft(v))$.
%
%
%
%
%
Note that this
cascading is done only to speed up navigation between the catalogues
of $T$.  Although fractional cascading introduces extra data into the
catalogues of $T$ we will continue to use the notations $C_x(v)$ and
$C_y(v)$ to denote the set of points contained in the catalogues of
$v$ before fractional cascading takes place.


Finally, each catalogue $C_x(v)$ and $C_y(v)$ is indexed by a biased
binary search tree $T_x(v)$, respectively, $T_y(v)$.  If $v$ is the
left child of its parent, then the weight of an interval $(a,b]$ in
$T_x(v)$, respectively, $T_y(v)$ is given by the probability that
$q_x$, respectively, $q_y$, is in the interval $(a,b]$ when $q$ is
drawn according to the distribution $D_{\mid s(\prnt(v))}$.  Otherwise
($v$ is not a left child), the weight of an interval is determined by
the distribution $D_{\mid s(v)}$.

\subsection{Construction Time and Space Requirements}

The biased range tree data structure is now completely defined.  The
structure consists of a backup tree, a primary tree, and the
catalogues of the primary tree.  We now analyze the construction time
and space requirements of biased range trees.

The backup tree has size $O(n\log n)$ and can be constructed in
$O(n\log n)$ time \cite[Theorem~5.11]{bkos97}.  To construct the
primary tree quickly we presort the points of $S$ by their $x$ and $y$
coordinates.  Since the primary tree has height $O(\log n)$, it is
then easily constructed in $O(n\log n)$ time.  Ignoring any copies of
points created by fractional cascading, each point in $S$ occurs in at
most 2 catalogues at each level of the primary tree.  Thus, the sizes
of all catalogues (before fractional cascading) is $O(n\log n)$ and
these catalogues can be constructed in $O(n\log n)$ time (because of
elements of $S$ are presorted; see de~Berg \etal\
\cite[Section~5.3]{bkos97} for details).  The fractional cascading
between catalogues does not increase the size of catalogues by more
than a constant factor since each catalogue is cascaded into only a
constant number of other catalogues \cite{ae76}.

In summary, given the point set $S$ and access to the distribution
$D$, a biased range tree for $(S,D)$ can be constructed in $O(n\log
n)$ time and requires $O(n\log n)$ space.

\subsection{The Query Algorithm}

The algorithm to answer a 2-sided range query $q=(q_x,q_y)$ proceeds
in three steps:

\begin{enumerate}

\item The algorithm navigates the tree $T$ from top to bottom to
locate the unique node $v$ such that $q\in s(v)$. This step takes
$O(d_T(q))$ time, where $d_T(q)$ is the depth of the node $v$.  If $v$
is a bad leaf (so $d_T(q)\ge \log n$) then the algorithm performs a
range query in $O(\log n)$ time using the backup range tree and the
query algorithm does not execute the next two steps.

\item  If $v$ has a left child then let $u=\lft(v)$, otherwise let
$u=v$. The algorithm uses $T_x(u)$ and $T_y(u)$ to locate $q_x$ and
$q_y$, respectively, in the catalogues $C_x(u)$ and $C_y(u)$,
respectively.

\item The algorithm walks back from $u$ to the root of $T$, locating
$q$ in the catalogues of all nodes on this path and computing the
results of the range counting query as it goes.  Thanks to fractional
cascading, each step of this walk can be done in constant time, so the
overall time for this step is also $O(d_T(q))$.
\end{enumerate}

Observe that Steps~1 and 3 of the query algorithm each take
$O(d_T(q))$ time.  The time needed to accomplish Step~2 of the
algorithm depends on exactly what is in the catalogues $C_x(u)$ and
$C_y(u)$, and will be the first quantity we study in the next section.

\section{Optimality of Biased Range Trees}
\seclabel{lower-bound}

In this section we show that the expected query time of biased range
trees is as good as the expected query time of any comparison tree.
The expected query time has two components.  The first component is
the expected depth, $d_T(q)$,  of the node $v$ such that $s(v)$
contains $q$.  The second component is the expected cost of locating
$q$ in the catalogues of $u$ (recall that $u=\lft(v)$ or $u=v$ if $v$
has no left child).  We will show that each of these two components
is a lower bound on the expected cost of any decision tree for
two-sided range searching on $S$ where queries come from distribution
$D$.  In order to simplify notation in this section we will use the
convention $\Pr(v)=\Pr(s(v))$ is the probability that a search
terminates at node $v$ of $T$.  

\subsection{The Catalogue Location Step}

First we show that the expected cost of locating $q$ in the two
catalogues, $C_x(u)$ and $C_y(u)$ is
a lower bound on the expected cost of any decision tree for answering
2-sided range queries in $S$.  The intuition behind this proof is
that, in order to correctly answer range counting queries, any decision tree
for range counting must locate the $x$-coordinate of $q$
with respect to the $x$-coordinates of all points above $q$.  
Similarly, it must locate the $y$-coordinate of $q$ with respect to
the $y$-coordinates of all points to the right of $q$.  The structure
of the catalogues ensures that biased range trees do this in the most
efficient manner possible.

\begin{lem}\lemlabel{cataloguer}
Let $S$ be a set of $n$ points and let $D$ be a probability measure
over $\R^2$.
Let $T^*$ be any decision tree for 2-sided range counting in $S$ and let
$C_2(S,D)$ denote the expected cost of locating $q$ in Step 2 of the
biased range tree query algorithm on the biased range tree $T=T(S,D)$. 
Then
\[
  \mu_D(T^*) = \Omega(C_2(S,D)) \enspace .
\] 
\end{lem}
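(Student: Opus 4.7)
The plan is to decompose the expected cost of $T^*$ over the strips $\{s(v):v\in T\}$, which partition $\R^2$, and to apply Shannon's theorem within each strip. Concretely,
\[
\mu_D(T^*) \;=\; \sum_{v\in T}\Pr(s(v))\cdot \mathbb{E}_{q\sim D_{|s(v)}}[d_{T^*}(q)],
\]
and $C_2(S,D)$ admits the analogous decomposition as a $\Pr(s(v))$-weighted sum, over non-bad-leaf nodes $v$, of the expected cost of Step~2 under $D_{|s(v)}$.

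Fix a non-bad-leaf $v$ and let $u=\lft(v)$ if $v$ has a left child and $u=v$ otherwise. I would first establish a geometric property of the catalogues that follows directly from \secref{catalogues}: every point of $C_x(u)$ has $y$-coordinate at least that of every $q\in s(v)$, and every point of $C_y(u)$ has $x$-coordinate at least that of every $q\in s(v)$. For example, when $v$ is horizontal and $u=\lft(v)$, the set $C_x(u)=(s(v)\cup r(\rght(v)))\cap S$ consists of points on the (closed) top boundary of $s(v)$ and points in $r(\rght(v))$; all of these sit at or above the top of $s(v)$, while every $q\in s(v)$ has $q_y$ at most the top of $s(v)$. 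The other cases (vertical $v$, or the leaf case $u=v$ where $v$ is itself a left child) are analogous. Consequently, for $q\in s(v)$, a point $p\in C_x(u)$ lies in $R(q)$ iff $p_x\ge q_x$, and symmetrically a point $p\in C_y(u)$ lies in $R(q)$ iff $p_y\ge q_y$.

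It follows that, as $q$ moves within $s(v)$, the count $|R(q)\cap S|$ strictly changes whenever $q_x$ crosses $p_x$ for some $p\in C_x(u)$ or $q_y$ crosses $p_y$ for some $p\in C_y(u)$. Let $G(v)$ denote the grid partition of $s(v)$ induced by these catalogue lines. Since the interior of the leaf-region $r(\ell)$ of any leaf $\ell$ of $T^*$ cannot intersect an edge of the count-arrangement (the argument recalled just before \thmref{shannon}), the interior of $r(\ell)\cap s(v)$ lies within one cell of $G(v)$, and so the partition of $s(v)$ induced by the leaves of $T^*$ refines $G(v)$. Shannon's theorem (\thmref{shannon}) therefore gives
\[
\mathbb{E}_{q\sim D_{|s(v)}}[d_{T^*}(q)] \;\ge\; H(G(v),\,D_{|s(v)}),
\]
where $H$ denotes entropy. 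Because at most one of $C_x(u),C_y(u)$ is non-empty, $G(v)$ is precisely the interval partition used to weight the non-trivial biased search tree at $u$, so the biased search tree bound \cite{m75} says that the Step~2 cost at $v$ is at most $H(G(v),D_{|s(v)})+1$. Weighting by $\Pr(s(v))$ and summing over $v$ yields $C_2(S,D)\le \mu_D(T^*)+1$, which is equivalent to $\mu_D(T^*)=\Omega(C_2(S,D))$.

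The main obstacle is the geometric step in the middle paragraph: rigorously verifying that $|R(q)\cap S|$ really does change across each catalogue line of $G(v)$, attending to the half-open conventions for the strips and to degenerate configurations in which several points of $S$ share an $x$- or $y$-coordinate. (Crossing $x=p_x$ decreases the count by exactly one in general position, and by a positive integer even when several points of $S$ share the coordinate $p_x$, so the refinement argument still goes through.) Once this refinement claim is in hand, Shannon's theorem and the biased search tree bound fit together cleanly with no further ideas required.
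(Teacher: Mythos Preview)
Your proposal is correct and follows essentially the same route as the paper: decompose $\mu_D(T^*)$ and $C_2(S,D)$ over the strips $\{s(v)\}$, argue that within each strip $T^*$ must distinguish the intervals determined by the catalogue coordinates (so its leaves refine your grid $G(v)$), and then invoke Shannon together with the biased-search-tree upper bound. The one genuine difference is that you exploit the observation that, by the catalogue definitions, at most one of $C_x(u),C_y(u)$ is non-empty (pre-cascading), which lets you treat $G(v)$ as a one-dimensional interval partition and obtain $C_2(S,D)\le \mu_D(T^*)+1$ directly; the paper instead bounds $\mu_{D_{\mid s(v)}}(T^*)$ against each of $\mu_{D_{\mid s(v)}}(T_x(u))$ and $\mu_{D_{\mid s(v)}}(T_y(u))$ separately, takes the maximum, and passes to the half-sum, landing at $\mu_D(T^*)\ge\tfrac{1}{2}C_2(S,D)-1$. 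Your version thus saves a factor of two in the constant. One small caution: your inequality ``Step~2 cost $\le H(G(v),D_{|s(v)})+1$'' presumes the biased trees $T_x(u),T_y(u)$ index the pre-cascading catalogues; if instead they index the post-cascading lists (as Step~3 seems to require), both trees can be non-trivial and you would need either the paper's $\max$-then-half-sum trick or a short argument that the cascaded points only perturb the entropy by $O(1)$.
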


\begin{proof}
We first observe that, by definition,
\[
  C_2(S,D) =  \sum_{v\in T} 
              \Pr(v)\left( \mu_{D_{\mid s(v)}}(T_x(u))
                               +  \mu_{D_{\mid s(v)}}(T_y(u)) \right)
           \enspace .
\]
Consider some node $v$ of $T$.  For a point $q\in s(v)$, all of the
points in $T_x(v)$ are points that may or may not be in the query
range $R(q)$ depending on where exactly $q$ is located within $s(v)$.
This implies that, if $T^*$ correctly answers range queries for every
point $q\in s(v)$ then it must determine the location of the
$x$-coordinate of $q$ with respect to all points in $T_x(v)$.  More
precisely, the leaves of $T^*$ could be relabelled to obtain a
comparison tree that determines, for any $q\in s(v)$, which interval
of $T_x(v)$ contains $q_x$.  Since $T_x(u)$ is a biased search tree
for the probability measure $D_{\mid s(v)}$, this implies that
\[
  \mu_{D_{\mid s(v)}}(T^*) \ge \mu_{D_{\mid s(v)}}(T_x(u)) - 1\enspace .
\]
Similarly, the same argument applied to $T_y(v)$ yields 
\[
  \mu_{D_{\mid s(v)}}(T^*) \ge \mu_{D_{\mid s(v)}}(T_y(u)) - 1\enspace .
\]
We can now complete the proof with
\begin{eqnarray*}
\mu_D(T^*) 
 & = & \sum_{v\in T} \Pr(v)\cdot\mu_{D_{\mid s(v)}}(T^*) \\
 & \ge & \sum_{v\in T}
	\Pr(v) \cdot\max\left\{\mu_{D_{\mid s(v)}}(T_x(u)), 
		       \mu_{D_{\mid s(v)}}(T_y(u))\right\} - 1 \\
 & \ge & \sum_{v\in T}
	\frac{1}{2}\Pr(v)\cdot\left( \mu_{D_{\mid s(v)}}(T_x(u))
                             +  \mu_{D_{\mid s(v)}}(T_y(u)) \right) - 1 \\
 & = & \frac{1}{2}\cdot C_2(S,D) - 1 = \Omega(C_2(S,D)) \enspace .
\end{eqnarray*}
\end{proof}

\subsection{The Tree Searching Step}

Next we bound the expected depth $d_T(q)$ of the node $v$ of $T$ such
that $q\in s(v)$.  We do this by showing that any decision tree $T^*$
for range counting in $S$ must solve a set of point location problems
and that the expected depth of $v$ is a lower bound on the complexity
of solving these problems.

We say that a set of rectangles is \emph{HV-independent} if no
horizontal or vertical line intersects more than one rectangle in the
set.  We say that a set $\{v_1,\ldots,v_k\}$ of nodes in $T$ is
\emph{HV-independent} if the set $\{r(v_1),\ldots,r(v_k)\}$ is
HV-independent.

\begin{lem}\lemlabel{independent}
Let $S$ be a set of $n$ points and let $D$ be a probability measure
over $\R^2$.
Let $T=T(S,D)$ be the biased range tree for $(S,D)$ and label
each node of $T$ white
or black, such that all white nodes are at distance at most $i$ from
the root of $T$.  Then, if $T$ contains more than $\gamma^i$ white nodes
then $T$ contains an HV-independent set of white nodes of size
$\Omega((\gamma/\sqrt{2})^i)$.
\end{lem}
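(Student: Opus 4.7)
The plan is to combine a one-dimensional stabbing bound with an interval-graph coloring argument, together with a structural fact about how pairs of tree nodes share coordinates.

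First I will establish a 1-D stabbing bound: for any vertical line $\ell$, at most $O((\sqrt{2})^{i})$ nodes $v$ of $T$ at depth $\le i$ satisfy $\ell \cap r(v) \ne \emptyset$. The nodes stabbed by $\ell$ form a subtree of $T$ in which, at each vertical ancestor, $\ell$ is forced into a unique child (or terminates inside the strip $s(\cdot)$), whereas at each horizontal ancestor both children inherit $\ell$. This subtree therefore branches only at horizontal nodes, so it contains at most $2^{\lfloor j/2\rfloor}$ nodes of depth $j$; summing over $j\le i$ gives the claimed $O((\sqrt{2})^{i})$ bound. A symmetric bound holds for any horizontal line.

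Next I will extract an $x$-independent subset by interval-graph coloring. Projecting the regions of the $N>\gamma^i$ white nodes onto the $x$-axis gives a family of intervals whose maximum stabbing number is $O((\sqrt{2})^{i})$ by the previous step. Since interval graphs are perfect, the white nodes can be properly coloured with $O((\sqrt{2})^{i})$ colours so that each colour class has pairwise disjoint $x$-projections; pigeonhole yields a class $C$ with $|C|=\Omega(N/(\sqrt{2})^{i})=\Omega((\gamma/\sqrt{2})^i)$. The key structural fact now enters: whenever the lowest common ancestor $w$ of two nodes is a horizontal node, their $y$-projections are automatically disjoint, since the horizontal strip $s(w)$ separates $\lft(w)$ from $\rght(w)$ in $y$. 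Hence within $C$ any pair whose LCA is horizontal is already HV-independent, and only pairs whose LCA is vertical can still conflict. A final cleanup step — either a secondary coloring restricted to these residual conflicts, or a direct geometric argument — should then promote $C$ to a genuinely HV-independent set.

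The main obstacle is this final step. Naively running the 1-D coloring argument again on the $y$-axis within $C$ would cost another factor of $(\sqrt{2})^i$ and yield only $\Omega((\gamma/2)^i)$ HV-independent nodes, which is too weak. To achieve the claimed $\Omega((\gamma/\sqrt{2})^i)$ bound one must show that the residual vertical-LCA $y$-conflicts inside $C$ have only $O(1)$ chromatic number. I expect this to follow from the rigid alternation of vertical and horizontal splits in the biased range tree: the horizontal strips placed below the two subtrees of any common vertical ancestor are not arbitrary but are coupled through their parent regions, so a large pairwise $y$-overlapping subfamily within $C$ would be forced into a common subtree structure that contradicts the $x$-disjointness enforced by $C$'s definition. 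Nailing down this coupling — essentially a geometric lemma about how $x$- and $y$-splits interact in $T$ — is where I expect the bulk of the technical work to lie.
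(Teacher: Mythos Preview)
Your stabbing bound is correct and is exactly the ingredient the paper uses. The gap is in the two-phase extraction: the conjecture that the residual $y$-conflict graph inside an $x$-disjoint colour class $C$ has $O(1)$ chromatic number is false, not merely unproven.

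Take the biased range tree on a point set that induces a perfectly balanced $k$-d tree, and colour all $2^i$ nodes at depth $i$ white. Their regions form a $2^{i/2}\times 2^{i/2}$ axis-aligned grid. The $x$-interval graph has clique number (and hence chromatic number) $2^{i/2}$, and one natural proper colouring assigns to each cell its row index; every colour class is then a full row of $2^{i/2}$ cells. Such a row is pairwise $x$-disjoint, but every pair in it shares the same $y$-projection, so the $y$-conflict graph restricted to this $C$ is the complete graph $K_{2^{i/2}}$. Moreover, every pair in a row has a vertical node as lowest common ancestor, so your LCA observation gives no leverage here. Pigeonhole does not let you avoid such a class, and any further colouring of $C$ costs the full $2^{i/2}$ factor you were trying to dodge.

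The fix is to stop separating the two axes. Define the conflict graph $G$ on the white nodes with an edge whenever \emph{either} a horizontal or a vertical line stabs both regions. Your stabbing bound, applied to the two vertical lines through the sides of $r(u)$ and the two horizontal lines through its top and bottom, shows every vertex of $G$ has degree $O(2^{i/2})$, and the same holds in every induced subgraph. A greedy (minimum-degree-first) deletion then yields an independent set of size $\Omega(|V|/2^{i/2})=\Omega((\gamma/\sqrt 2)^i)$ directly. This is precisely the paper's argument; your interval-colouring machinery is not needed.
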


\begin{proof}
Define a graph $G=(V,E)$ whose vertices are the white nodes of $T$ and
for which $uv\in E$ if and only if there is a horizontal or vertical line that
intersects both $r(u)$ and $r(v)$.  Note that an independent set of
vertices in $G$ is an HV-independent set of which nodes in $T$.  Thus,
it suffices to find a sufficiently large independent set in $G$

A well-know result on $k$-d trees states that, for a $k$-d tree of
height $i$, any horizontal or vertical line intersects at most
$2^{\ceil{i/2}}$ rectangles of the $k$-d tree
\cite[Lemma~5.4]{bkos97}.  Therefore, since $T$ is a $k$-d
tree,\footnote{Although $T$ is not exactly a $k$-d tree as described
in Reference~\cite{bkos97}, the proof found there still holds.} the
number of edges in $G$ is at most $|V|\cdot 2^{\ceil{i/2}}$.  This
implies that $G$ has a vertex $v$ of degree at most $2^{\ceil{i/2}+1}$
and this is also true of any vertex-induced subgraph of $G$.

We can therefore obtain an independent set in $G$ by repeatedly
selecting a vertex $v$ of degree $2^{\ceil{i/2}+1}$, adding $v$ to the
independent set and deleting $v$ and its neighbours from $G$.  Since, at
each step we add one vertex to the independent set and delete at most
$2^{\ceil{i/2}+1}+1$ vertices from $G$, this produces an independent of size
$\Omega(|V|/2^{i/2}) = \Omega((\gamma/\sqrt{2})^i)$, as required.
\end{proof}

We can now provide the second piece of the lower bound.

\begin{lem}\lemlabel{depth}
Let $S$ be a set of $n$ points and let $D$ be a probability measure
over $\R^2$.
Let $T^*$ be any comparison tree that does range counting over $S$. Let
$C_1(S,D)$ denote the expected depth of the node $v$ of the biased
range tree $T=T(S,D)$ such that $q\in s(v)$.  Then
\[
    \mu_D(T^*) = \Omega(C_1(S,D))
\]
\end{lem}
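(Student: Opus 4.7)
I plan a level-by-level decomposition of $C_1(S,D)$ that, at each sufficiently \emph{dense} level of the primary tree $T$, invokes \lemref{independent} to produce a structural obstacle forcing $T^*$ to be deep there. First, rewrite
\[
C_1(S,D) = \sum_{v \in T} \Pr(s(v))\, d_T(v) = \sum_{i \geq 1} \Pr(d_T(v_q) \geq i) = \sum_{i \geq 1} \sum_{v \in V_i} \Pr(r(v)),
\]
where $V_i$ denotes the set of nodes of $T$ at depth exactly $i$ and $v_q$ is the unique node of $T$ with $q \in s(v_q)$.

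Fix a constant $\gamma$ with $\sqrt{2} < \gamma < 2$ and call level $i$ \emph{sparse} if $|V_i| \leq \gamma^i$ and \emph{dense} otherwise. By Point~1 of the primary tree's properties, $\Pr(r(v)) \le 2^{-i}$ for every $v \in V_i$, so the total probability mass at sparse levels is at most $\sum_i (\gamma/2)^i = O(1)$ and these levels contribute only $O(1)$ to $C_1$. It therefore suffices to bound the contribution of dense levels. At each dense level $i$, color the nodes of $V_i$ white and apply \lemref{independent} to extract an HV-independent set $W_i \subseteq V_i$ of size $k_i := |W_i| = 2^{\Omega(i)}$, and put $A_i = \bigcup_{v \in W_i} r(v)$.

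The key claim is that any comparison tree $T^*$ correctly answering range counting over $S$ satisfies $\mu_{D_{\mid A_i}}(T^*) = \Omega(i)$. The intuition is that each comparison made by $T^*$ corresponds to a horizontal or vertical line in the plane, and by HV-independence such a line crosses at most one rectangle in $\{r(v) : v \in W_i\}$; hence no single comparison can simultaneously resolve many of them against each other. Combined with Point~2 of the primary tree's properties (each $r(v)$ contains a point of $S$ in its closure), one argues that $T^*$ must route queries from the $k_i$ rectangles to $\Omega(k_i)$ distinct leaves in order to answer range counting correctly. \thmref{shannon}, applied to this induced classification problem, then yields the desired bound $\mu_{D_{\mid A_i}}(T^*) = \Omega(\log k_i) = \Omega(i)$.

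To assemble these level-wise bounds into the final estimate, I would sharpen the key claim to the conditional statement $E[d_{T^*}(q) \mid d_T(v_q) = i] = \Omega(i)$ at each dense level, so that the total expectation law gives $\mu_D(T^*) \ge \sum_i E[d_{T^*}(q) \mid d_T(v_q) = i]\,\Pr(d_T(v_q)=i) = \Omega\!\left(\sum_i i\,\Pr(d_T(v_q)=i)\right) = \Omega(C_1(S,D))$, absorbing the $O(1)$ sparse contribution into the constant. The main obstacle will be the key claim itself: HV-independence guarantees that no single comparison can separate many rectangles, but Point~2 only places a point of $S$ in the \emph{closure} of each $r(v)$, not in its interior. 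A careful argument---restricting to queries bounded away from the rectangle boundaries, or exploiting the specific structure of the boundary points given by Point~2---is needed to force $T^*$ to produce the required $\Omega(k_i)$ distinct leaves.
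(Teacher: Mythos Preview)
Your overall strategy---extract HV-independent sets and apply \thmref{shannon}---matches the paper's, but the decomposition by \emph{depth} has a real gap at the entropy step. When you condition on $A_i = \bigcup_{v \in W_i} r(v)$ and invoke \thmref{shannon} to conclude $\mu_{D_{\mid A_i}}(T^*) = \Omega(\log k_i)$, you are implicitly assuming the $k_i$ rectangles carry comparable conditional probability. But grouping by depth only gives the \emph{upper} bound $\Pr(r(v)) \le 2^{-i}$ from Point~1; there is no lower bound, so one rectangle in $W_i$ may carry essentially all of $\Pr(A_i)$ while the remaining $k_i-1$ carry almost none. In that case the entropy of the induced classification problem is near zero and Shannon yields nothing. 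The same obstruction blocks the sharpened claim $E[d_{T^*}(q)\mid d_T(v_q)=i]=\Omega(i)$: conditioned on $d_T(v_q)=i$, the mass over $\{s(v):v\in V_i\}$ can be arbitrarily skewed (subject only to each $\Pr(s(v))\le 2^{-i}$), and $T^*$ is under no obligation to spend $\Omega(i)$ comparisons on the few high-mass strips.

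The paper circumvents this by grouping nodes not by depth but by $\Pr(s(v))$: group $G_i$ contains all $v$ with $2^{-i}\le\Pr(s(v))<2^{-i+1}$. This simultaneously (a)~forces every node in $G_i$ to lie at depth at most $i$ (since $\Pr(s(v))\le 2^{-d(v)}$), so \lemref{independent} still applies, and (b)~guarantees that within any HV-independent subset $G_{i,j}$ the conditional probabilities are all within a factor of~$2$ of one another, so \thmref{shannon} delivers the full $\Omega(\log|G_{i,j}|)=\Omega(i)$. A secondary difference: the paper \emph{repeatedly} extracts HV-independent subsets $G_{i,1},G_{i,2},\ldots$ from each $G_i$ until only a leftover of size at most $\gamma^i$ remains, rather than taking a single $W_i$; this is needed because one HV-independent set of size $\alpha^i$ may account for only a small fraction of the nodes (and hence of the mass) in $G_i$.
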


\begin{proof}
Partition the nodes of $T$ into groups $G_1,G_2,\ldots$ where $G_i$
contains all nodes $v$ such that $1/2^{i} \le \Pr(v) \le 1/2^{i-1}$.
Observe that the nodes in group $G_i$ occur in the first $i$ levels of
$T$.  Select a constants $\gamma$ and $\beta$ with $\sqrt{2} < \gamma
< \beta < 2$ and define $\alpha=\gamma/\sqrt{2}$.  By
repeatedly applying \lemref{independent}, each group $G_i$ can be
partitioned into groups $G_{i,1},\ldots,G_{i,t_i}$ where, for each $1
\le j < t_i$, $G_{i,j}$ is an HV-independent set with $|G_{i,j}|
\ge \alpha^i$.  Furthermore, $|G_{i,t_i}| \le \gamma^i$. (Note that
$G_{i,t_i}$ is not necessarily HV-independent.)

Consider some group $G_{i,j}$ for $1\le j < t_i$.  Let $\ell$ be a
leaf of $T^*$ and observe that, because the nodes in $G_{i,j}$ are
independent and each one contains at least one point of $S$ in its
closure, there are at most 4 nodes $v$ in $G_{i,j}$ such that
$r(\ell)$ intersects the closure of $r(v)$.  (Otherwise $r(\ell)$
contains a point of $S$ in its interior and therefore $T^*$ does not
solve the range counting problem for $S$.) Thus, by performing 2 
additional comparisons, $T^*$ can be used to determine which node of
$v\in G_{i,j}$ (if any) contains the query point $q$ in $s(v)$.
However, $G_{i,j}$ contains $\Omega(\alpha^i)$ nodes and the search
path for $q$ terminates at each of these with probability between
$1/2^i$ and $1/2^{i-1}$.  Therefore, if we denote by $D_{i,j}$ the
distribution $D$ conditioned on the search path for $q$ terminating in
one of the nodes in $G_{i,j}$ then we have, by applying
\thmref{shannon},
\begin{eqnarray*}
   \mu_{D_{i,j}}(T^*) + 2 
    & \ge & \sum_{v\in G_{i,j}}\Pr(v\mid G_{i,j})\log(1/\Pr(v\mid G_{i,j}) \\
    & \ge & \sum_{v\in G_{i,j}}\Pr(v\mid G_{i,j})\log(\Omega(\alpha^i)) \\
    & \ge & \log(\Omega(\alpha^i)) \\
    & = & i\log\alpha - O(1) \enspace .
\end{eqnarray*}

Putting this all together, we obtain
\begin{eqnarray*}
\mu_D(T^*) 
  & = & \sum_{i=1}^{\infty}\sum_{j=1}^{t_i}\Pr(G_{i,j})\mu_{D_{i,j}}(T^*) \\
  & \ge & \sum_{i=1}^{\infty}
    \sum_{j=1}^{t_i-1}\Pr(G_{i,j})\mu_{D_{i,j}}(T^*) \\
  & \ge & \sum_{i=1}^{\infty}\sum_{j=1}^{t_i-1}
           \Pr(G_{i,j})( i\log\alpha -O(1)) \\
  & \ge & (\log\alpha)\cdot
         \sum_{i=1}^{\infty}\sum_{j=1}^{t_i-1}
		\sum_{v\in G_{i,j}}\Pr(v)\cdot\depth(v) - O(1)\\
  & = & (\log\alpha)\cdot\sum_{v\in T}\Pr(v)\cdot \depth(v)
          -    \sum_{i=1}^{\infty}
		\sum_{v\in G_{i,t_i}}\Pr(v)\cdot\depth(v) - O(1)\\
  & \ge & (\log\alpha)\cdot\sum_{v\in T}\Pr(v)\cdot \depth(v)
          -    \sum_{i=1}^{\infty}i\cdot\Pr(G_{i,t_i}) - O(1) \\
  & \ge & (\log\alpha)\cdot\sum_{v\in T}\Pr(v)\cdot \depth(v)
          -    \sum_{i=1}^{\ceil{\log n}}i\gamma^i/2^{i-1} - O(1) \\
  & \ge &  (\log\alpha)\cdot\sum_{v\in T}\Pr(v)\cdot \depth(v) - O(1) \\
  & = & \Omega(C_1(S,D)) \enspace ,
\end{eqnarray*}
where the last inequality follows from the fact that $\gamma/2 < 1$. 
\end{proof}

To get some idea of the constants involved in the proof of
\lemref{depth}, we can select $\gamma=1.6$, so that
$\alpha=1.6/\sqrt{2}\approx 1.13137085$ and $\log \alpha \approx
0.178071905$ and the $O(1)$ term is approximately 20.  Thus, for this
choice of parameters, the depth in $T$ is competitive with $T^*$ to
within a factor of $1/0.178071905\approx 5.615$ and an additive
constant of 20.  Alternatively, selecting $\gamma=1.8$ gives a
constant factor less than 3 and an additive term of approximately 90.

And now the main event:

\begin{thm}
Let $S$ be a set of $n$ points and let $D$ be a probability measure
over $\R^2$.
Let $T=T(S,D)$ be the biased range tree for $S$ and $D$ and 
let $T^*$ be any decision
tree that answers range counting queries for $S$.  Then
\[
  \mu_D(T^*) = \Omega(\mu_D(T)) \enspace .
\]
\end{thm}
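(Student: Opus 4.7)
The plan is to decompose the expected query time $\mu_D(T)$ of the biased range tree into the costs of the three steps of the query algorithm and then to bound each piece by the two quantities $C_1(S,D)$ and $C_2(S,D)$ that are already shown to be lower bounds on $\mu_D(T^*)$ in \lemref{depth} and \lemref{cataloguer}.

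First I would establish the upper bound $\mu_D(T) = O(C_1(S,D) + C_2(S,D))$. Steps~1 and~3 of the query algorithm each run in $O(d_T(q))$ time, so their combined expected contribution is $O(C_1(S,D))$ by definition of $C_1$. Step~2 contributes exactly $C_2(S,D)$ in expectation, again by definition. The only subtlety is the $O(\log n)$ backup-tree query invoked when $q$ lands in a bad leaf $v$; but every bad leaf satisfies $d_T(q)\ge \ceil{\log_2 n}$, so $\log n = O(d_T(q))$ whenever the backup is used, and this extra cost is absorbed into the $O(C_1(S,D))$ term. In those cases Step~2 is skipped, so no double counting occurs.

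Next I would apply \lemref{depth} to obtain $\mu_D(T^*) = \Omega(C_1(S,D))$ and \lemref{cataloguer} to obtain $\mu_D(T^*) = \Omega(C_2(S,D))$. Averaging these two inequalities gives
\[
  \mu_D(T^*) \;=\; \Omega\bigl(C_1(S,D) + C_2(S,D)\bigr),
\]
and combining this with the upper bound on $\mu_D(T)$ produces $\mu_D(T^*) = \Omega(\mu_D(T))$, as required.

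There is no serious obstacle remaining since the heavy lifting has already been done in the two previous lemmas. The only care needed is the bookkeeping around the backup-tree cost (making sure it is correctly absorbed into the $C_1$ term and that queries routed to the backup tree contribute nothing to $C_2$), and the trivial observation that a lower bound on each of two nonnegative summands separately yields a lower bound on their sum up to a factor of two.
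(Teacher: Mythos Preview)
Your proposal is correct and follows essentially the same approach as the paper: show $\mu_D(T)=O(C_1(S,D)+C_2(S,D))$ from the query algorithm, then combine \lemref{depth} and \lemref{cataloguer} to get $\mu_D(T^*)=\Omega(C_1(S,D)+C_2(S,D))$. Your explicit handling of the backup-tree cost is a bit more careful than the paper's one-line version, but the argument is the same.
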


\begin{proof}
By the definition of $C_1$ and $C_2$, the expected cost of searching in
$T$ is $\mu_D(T)=O(C_1(S,D)+C_2(S,D))$.  On the other hand, by
\lemref{depth} and \lemref{cataloguer} $\mu_D(T^*) =
\Omega(\max\{C_1(S,D),C_2(S,D)\}) =
\Omega(C_1(S,D)+C_2(S,D))=\Omega(\mu_D(T))$.  This completes the proof.
\end{proof}

\section{Summary, Discussion, and Conclusions}
\seclabel{summary}

We have presented biased range trees, an optimal data structure for
2-sided orthogonal range counting queries when the point set $S$ and
query distribution $D$ is known in advance. The expected time required
to answer queries with a biased range tree, when the queries are
distributed according to $D$, is within a constant factor of any
decision tree for answering range queries over $S$.  Like standard
range trees, biased range trees use $O(n\log n)$ space and can also
answer semigroup queries \cite{ae133,ae292}.\footnote{That biased
range trees can answer semigroup queries follows
from Properties~1--3 of the catalogues in \secref{catalogues}.}
Although the analysis of
biased range trees is complicated, their implementation is not much
more complicated than that of standard range trees.

As a small optimization, the backup range tree data structure can be
eliminated from biased range trees.  Instead, once the probability of
a node $v$ drops below $1/n$ the node can be split by ignoring the
distribution $D$ and simply splitting the points of $r(v)\cap S$ into
two sets of roughly equal size.  This results in a tree of depth at
most $2(\log n+1)$.

This work is just one of many possible results on
distribution-sensitive range searching.  Several open problems
immediately arise.

\begin{op}
Are there efficient distribution-sensitive data structures for 3-sided
and 4-sided orthogonal range counting queries?
\end{op}

Note that a 4-sided orthogonal range counting query can be reduced to
4 2-sided orthogonal range counting queries using the principle of
inclusion-exclusion.  Unfortunately, this reduction does not produce
an optimal distribution-sensitive data structure.  To see this,
consider 4-sided queries consisting of unit squares whose bottom left
corner is uniformly distributed in the shaded region of
\figref{squares}.  All such queries contain no points in the query
region and all such queries can be answered in $O(1)$ time by simply
checking that all four corners of the square are to the left of the point
set.  However, when we decompose these queries into a four 2-sided
queries we obtain 2-sided queries that require $\Omega(\log n)$ time
to be answered.

\begin{figure}
  \begin{center}
    \begin{tabular}{c}
      \includegraphics{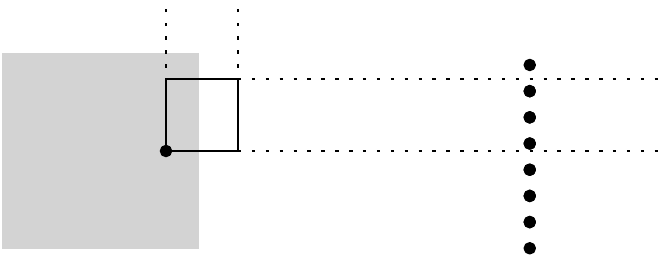}
    \end{tabular}
  \end{center}
  \caption{Decomposing a 4-sided query into four 2-sided queries can
           produce a bad distribution of 2-sided queries.}
  \figlabel{squares}
\end{figure}

\begin{op}
Biased range trees require that the point set $S$ and the
distribution $D$ be known in advance.  Is
there a self-adapting version of biased range trees that, without
knowing $D$ in advance, can answer $m$ queries, each drawn
independently from $D$ in $O(n\log n+ m\mu_D(T^*))$ expected time?
\end{op}

\begin{op}
Determine the worst-case or the average case constants associated with
2-dimensional orthogonal range searching for comparison-based data structures.
By applying the result of Adamy and Seidel \cite{as98} on point
location to the arrangement $A$ described in \secref{preliminaries} one
immediately obtains an $O(n^2)$ space data structure that answers
queries using at most $2\log n + O(\log\log n)$ comparisons.  Is there
an $O(n\log n)$ space structure with the same performance?
\end{op}

\begin{op}
A point $q\in \R^d$ is \emph{maximal} with respect to $S\subseteq\R^d$
if no point of $S$ has every coordinate larger than the corresponding
coordinate of $q$. For $d\ge 3$,  is there a distribution-sensitive 
data structure for
testing if a query point $q$ is maximal?  For point sets in 2
dimensions, an orthogonal variant of the point-location techniques of Collette
\etal\ \cite{cdilm08} seems to apply. 
\end{op}

\begin{op}
Are there distribution-sensitive data structures for $d$-sided range
search in point sets in $\R^d$?  The current fastest structures
for range search in point sets in $\R^d$ that use near-linear space have
$\Theta(\log^{d-1} n)$ query time.  Is there a structure that uses
near-linear space and is optimal
when the point set $S$ and the distribution $D$ are known in advance?
\end{op}

\bibliographystyle{plain}
\bibliography{rangesearch}

\end{document}